\newtheoremstyle{custom}
{} 
{} 
{} 
{} 
{\bfseries} 
{:} 
{.25em} 
{} 
\theoremstyle{custom}
\newtheorem{theorem}{Theorem}
\newtheorem{lemma}{Lemma}
\newtheorem{proposition}{Proposition}
\newtheorem{definition}{Definition}
\newtheorem{example}{Example}
\newtheorem*{theorem*}{Theorem}
\newtheorem*{lemma*}{Lemma}
\newtheorem*{proposition*}{Proposition}
\newtheorem*{definition*}{Definition}
\newtheorem*{example*}{Example}
\newtheorem*{remark*}{Remark}
\newtheorem*{corollary*}{Corollary}
\let\l@ENGLISH\l@english
\title{Construction of Polar Codes with Sublinear Complexity}
\author{Marco~Mondelli, S.~Hamed~Hassani, and~R\"{u}diger~Urbanke%
\thanks{M. Mondelli is with the Department of Electrical Engineering, Stanford University, USA
(e-mail: mondelli@stanford.edu).

R. Urbanke is with the School of Computer and Communication Sciences,
EPFL, CH-1015 Lausanne, Switzerland
(e-mail: ruediger.urbanke@epfl.ch).

S. H. Hassani is with the Department of Electrical and Systems Engineering, University of Pennsylvania, USA
(e-mail: hassani@seas.upenn.edu).}
}
\begin{document}

\maketitle
\begin{abstract}
\noindent Consider the problem of constructing a polar code of block length $N$ for the transmission over a given channel $W$. Typically this requires to compute the reliability of all the $N$ synthetic channels and then to include those that are sufficiently reliable. However, we know from \cite{Sc16, BDOT16} that there is a partial order among the synthetic channels. Hence, it is natural to ask whether we can exploit it to reduce the computational burden of the construction problem.

We show that, if we take advantage of the partial order \cite{Sc16, BDOT16}, we can construct a polar code by computing the reliability of roughly a fraction $1/\log^{3/2} N$ of the synthetic channels. In particular, we prove that  $N/\log^{3/2} N$ is a lower bound on the number of synthetic channels to be considered and such a bound is tight up to a multiplicative factor $\log\log N$. This set of roughly $N/\log^{3/2} N$ synthetic channels is universal, in the sense that it allows one to construct polar codes for any $W$, and it can be identified by solving a maximum matching problem on a bipartite graph.

Our proof technique consists in reducing the construction problem to the problem of computing the maximum cardinality of an antichain for a suitable partially ordered set. As such, this method is general and it can be used to further improve the complexity of the construction problem in case a new partial order on the synthetic channels of polar codes is discovered. 
\end{abstract}

\begin{IEEEkeywords}
Polar codes; partial order; construction problem; antichain; chain.
\end{IEEEkeywords}

\section{Introduction} \label{sec:intro}

Polar codes, introduced by Ar{\i}kan \cite{Ari09}, achieve the
capacity of any binary memoryless symmetric (BMS) channel with
encoding and decoding complexity $\Theta(N \log_2 N)$, where $N$
is the block length of the code. A unified characterization of the
performance of polar codes in several regimes is presented in
\cite{MHU15unif-ieeeit}. Let us mention the following basic facts:
the error probability scales with the block length roughly as
$2^{-\sqrt{N}}$ \cite{ArT09}; the gap to capacity scales with the
block length as $N^{-1/\mu}$, and bounds on the scaling exponent
$\mu$ are provided in \cite{HAU14, GB14, MHU15unif-ieeeit}; and
polar codes are not affected by error floors \cite{MHU15unif-ieeeit}.
A successive cancellation list (SCL) decoder with space complexity
$O(L N)$ and time complexity $O(L N \log_2 N)$ is proposed in
\cite{TVa15}, where $L$ is the size of the list. Empirically, the
use of several concurrent decoding paths yields an error probability
comparable to that under optimal MAP decoding with practical values
of the list size. Furthermore, by adding only a few extra bits of
cyclic redundancy check (CRC) precoding, the resulting performance
is comparable with state-of-the-art LDPC codes. Because of their
attractive features, polar codes are being considered for use in
future wireless communication systems (e.g., 5G cellular systems).

The idea of channel polarization is to take independent copies of
the transmission channel and to transform them into a set of reliable
channels and a set of unreliable channels, in such a way that the
overall capacity is preserved. Then, the information bits are
transmitted in the positions corresponding to the reliable channels
and the remaining positions are \emph{frozen} (i.e., their value
is shared between the encoder and the decoder). Therefore, in order
to construct a polar code, we need to identify the positions
corresponding to the reliable synthetic channels. Several techniques
have proposed to estimate the reliability of the synthetic channels:
Monte Carlo simulations \cite{Ari09}, density evolution
\cite{MoT09,MoTCommLett09}, Gaussian approximation of density
evolution \cite{Tr12}, efficient degrading and upgrading methods
\cite{TV13con, RHTT}.

In general, the ranking of the synthetic channels depends on the
specific transmission channel. Hence, one solution to the problem
of code construction is to evaluate the reliability of \emph{all}
synthetic channels. However, it was observed that there is a partial
order between the synthetic channels, which holds for any transmission
channel. A first partial order was described in \cite{MoTCommLett09}
and it was combined with a different partial order in the two
independent works \cite{Sc16, BDOT16}. In \cite{Sc16}, it is also
empirically shown that, by exploiting this combined partial order,
the complexity of the code construction can be significantly reduced.

In this paper, we give a tight characterization of the complexity
reduction guaranteed by the exploitation of this partial order. In
particular, we derive \emph{universal} bounds on the number of
synthetic channels whose reliability has to be computed in order
to construct the polar code. The bounds are \emph{universal} in the
sense that they hold for any transmission channel. Our main result
consists in proving an upper and a lower bound that differ by a
factor of $\log\log N$, where $N=2^n$ is the block length of the
code. The lower bound is equal to a known integer sequence, i.e.,
the maximal number of subsets of $\{1,2,\ldots,n\}$ that share the
same sum (sequence A025591 in \cite{OEISref}). Such a sequence scales as $N/\log^{3/2}
N$, which means that we need to compute the reliability of roughly
a fraction $1/\log^{3/2} N$ of the synthetic channels. In other words,
in order to construct a polar code, it suffices to know the reliability
of a sublinear number of synthetic channels.  In practice, this means that
at moderate block lengths ($N\approx 10^3$) we can save at least 80\% of the
channel computations. 

The remainder of this paper is organized as follows. In Section
\ref{sec:prel}, we set up the notation, describe the partial
order derived in \cite{Sc16, BDOT16}, and formalize the construction problem. In Section \ref{sec:main}, we state the main result about the complexity of the construction problem and we present its immediate implications. In Section \ref{sec:proof}, we give the proof and we describe how to actually find the channels whose reliability has to be computed. In Section \ref{sec:concl}, we provide some concluding remarks. The proofs of some intermediate results are deferred to the Appendix.

\section{Preliminaries} \label{sec:prel}

\subsection{Reliability Measures and Degradation}\label{subsec:reliable}

Let $W : \mathcal X \to \mathcal Y$ be a BMS channel with input alphabet $\mathcal X= \{0, 1\}$, output alphabet $\mathcal Y$, and transition probabilities $p_{Y\mid X}(y \mid x)$ for $x\in \mathcal X$ and $y\in \mathcal Y$. The random variables representing the input and the output of the channel are denoted by $X$ and $Y$, respectively. Since the channel is symmetric, we impose a uniform prior on the input, i.e., $p_X(0) = p_X(1)=1/2$.

There are several measures of the reliability of a channel, as specified by the following definition. 

\begin{definition}[Reliability Measures]\label{def:reliable}
Let $W : \mathcal X \to \mathcal Y$ be a BMS channel with transition probabilities $p_{Y\mid X}(y \mid x)$ for $x\in \mathcal X$ and $y\in \mathcal Y$. The reliability of $W$ is measured by one of the following quantities:
\begin{itemize}
	\item The \emph{mutual information} $I(W)$, defined as
\begin{equation}
	I(W) = I(X;Y);
\end{equation}

	\item The \emph{Bhattacharyya parameter} $Z(W)$, defined as
\begin{equation}
Z(W) = \sum_{y\in \mathcal Y} \sqrt{p_{Y\mid X}(y\mid 0)p_{Y\mid X}(y\mid 1)};
\end{equation}

	\item The \emph{MAP error probability} $P_{\rm e}(W)$, defined as
\begin{equation}
P_{\rm e}(W) = {\mathbb P}(X\neq \hat{x}(Y)),
\end{equation}
where $\hat{x}(y) = {\rm argmax}_x  p_{X\mid Y}(x\mid y)$ is the MAP decision of $X$ given $Y$.	

\end{itemize}

\end{definition}

Note that a channel is \emph{reliable} when it has a \emph{large} mutual information, a \emph{small} Bhattacharyya parameter, and a \emph{small} MAP error probability.

Let us now define the concept of stochastic degradation.

\begin{definition}[Stochastic Degradation]
Let $W_1 : \mathcal X \to \mathcal Y_1$ and $W_2 : \mathcal X\to Y_2$ be two BMS channels with respective transition probabilities $p_{Y_1\mid X}(y_1\mid x)$ and $p_{Y_2\mid X}(y_2\mid x)$, for $x\in \mathcal X$, $y_1 \in \mathcal Y_1$, and $y_2\in \mathcal Y_2$. We say that $W_1$ is \emph{stochastically degraded} with respect to $W_2$ and we write $W_1 \preceq W_2$ if there exists a memoryless channel with transition probabilities $p_{Y_1\mid Y_2}(y_1\mid y_2)$ such that for all $x\in \mathcal X$ and $y_1 \in \mathcal Y_1$,
\begin{equation}
p_{Y_1\mid X}(y_1\mid x) = \sum_{y_2\in \mathcal Y_2}p_{Y_1\mid Y_2}(y_1\mid y_2)p_{Y_2\mid X}(y_2\mid x).
\end{equation}
\end{definition}

If a channel is stochastically degraded, all the reliability measures defined in Definition \ref{def:reliable} become worse. This means that the mutual information decreases, the Bhattacharyya parameter increases, and the error probability increases. Such a fact is formalized by the following proposition (see Theorem 4.76 of \cite{RiU08} or Lemma 3 of \cite{TV13con}). 

\begin{proposition}[Stochastic Degradation and Reliability Measures]\label{prop:degrrel}
Let $W_1 : \mathcal X \to \mathcal Y_1$ and $W_2 : \mathcal X\to Y_2$ be two BMS channels and assume that $W_1\preceq W_2$. Then,
\begin{align}
I(W_1) &\le I(W_2),\\
Z(W_1) &\ge Z(W_2),\\
P_{\rm e}(W_1) &\ge P_{\rm e}(W_2).
\end{align}
\end{proposition}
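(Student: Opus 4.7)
The plan is to exploit the Markov chain $X\to Y_2\to Y_1$ built into the definition of $W_1\preceq W_2$: by hypothesis there exists a channel with transitions $p_{Y_1\mid Y_2}(y_1\mid y_2)$ such that $W_1$ is obtained by cascading $W_2$ with this channel. All three inequalities are then instances of a ``post-processing cannot help'' principle, but each reliability measure needs its own short argument. For the mutual information I would simply invoke the classical data processing inequality on the chain $X\to Y_2\to Y_1$, which gives $I(X;Y_1)\le I(X;Y_2)$, i.e., $I(W_1)\le I(W_2)$.

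For the Bhattacharyya parameter I would substitute the degradation identity into the definition of $Z(W_1)$ and then apply Cauchy--Schwarz pointwise in $y_1$. Concretely, with $a_{y_2}=p_{Y_1\mid Y_2}(y_1\mid y_2)\,p_{Y_2\mid X}(y_2\mid 0)$ and $b_{y_2}=p_{Y_1\mid Y_2}(y_1\mid y_2)\,p_{Y_2\mid X}(y_2\mid 1)$, the inequality $\bigl(\sum_{y_2}\sqrt{a_{y_2}b_{y_2}}\bigr)^{2}\le\bigl(\sum_{y_2}a_{y_2}\bigr)\bigl(\sum_{y_2}b_{y_2}\bigr)$ yields
\[
\sqrt{p_{Y_1\mid X}(y_1\mid 0)\,p_{Y_1\mid X}(y_1\mid 1)}\ge \sum_{y_2} p_{Y_1\mid Y_2}(y_1\mid y_2)\sqrt{p_{Y_2\mid X}(y_2\mid 0)\,p_{Y_2\mid X}(y_2\mid 1)}.
\]
Summing over $y_1$ and using the stochasticity identity $\sum_{y_1} p_{Y_1\mid Y_2}(y_1\mid y_2)=1$ then gives $Z(W_1)\ge Z(W_2)$.

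For the MAP error probability I would argue operationally: any decision rule $\hat{x}(\cdot)$ acting on $Y_1$ can be simulated from $Y_2$ by first sampling $Y_1$ according to the known kernel $p_{Y_1\mid Y_2}$ and then applying $\hat{x}$. Hence there is a (possibly randomized) decision rule based on $Y_2$ whose error probability equals $P_{\rm e}(W_1)$, and the MAP decoder on $Y_2$, being optimal, can only do better; thus $P_{\rm e}(W_2)\le P_{\rm e}(W_1)$. The only real obstacle is the middle inequality: one must recognize that the ``square root of a product of sums'' structure produced by substituting the degradation identity is exactly what Cauchy--Schwarz controls, and be careful that the inequality points in the correct direction once one sums over $y_1$.
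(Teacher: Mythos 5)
Your proof is correct. The paper itself does not prove this proposition; it simply cites Theorem 4.76 of \cite{RiU08} and Lemma 3 of \cite{TV13con}, and the three arguments you give --- the data processing inequality on the Markov chain $X\to Y_2\to Y_1$ for $I(\cdot)$, Cauchy--Schwarz applied pointwise in $y_1$ followed by the stochasticity of the degrading kernel for $Z(\cdot)$, and the operational simulation of any $Y_1$-based decision rule from $Y_2$ together with the optimality of the MAP rule for $P_{\rm e}(\cdot)$ --- are exactly the standard proofs found in those references, with the directions of all inequalities handled correctly.
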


\subsection{Synthetic Channels}

The basis of channel polarization consists in mapping two identical copies of the channel $W: \mathcal{X}\to \mathcal{Y}$ into the pair of channels $W^0: \mathcal{X}\to \mathcal{Y}^2$ and $W^1:\mathcal{X}\to \mathcal{X}\times\mathcal{Y}^2$, defined as
\begin{align}
W^0(y_1, y_2\mid x_1) & = \sum_{x_2\in \mathcal X} \frac{1}{2}W(y_1\mid x_1 \oplus x_2) W(y_2\mid x_2),\label{eq:minus}\\
W^1(y_1, y_2, x_1\mid x_2) & = \frac{1}{2}W(y_1\mid x_1 \oplus x_2) W(y_2\mid x_2).\label{eq:plus}
\end{align}
Then, $W^0$ is a worse channel in the sense that it is degraded with respect to $W$, hence less reliable than $W$; and $W^1$ is a better channel in the sense that it is upgraded with respect to $W$, hence more reliable than $W$.

By iterating this operation $n$ times, we map $N=2^n$ identical copies of the transmission channel $W$ into the synthetic channels $\{W_N^{(i)}\}_{i\in \{0, \ldots, N-1\}}$. More specifically, given $i\in \{0, \ldots, N-1\}$, let $(i_1, i_2, \ldots, i_n)$ be its binary expansion over $n$ bits, where $i_1$ is the most significant bit and $i_n$ is the least significant bit, i.e.,
\begin{equation}\label{eq:defbinexp}
i = \sum_{k=1}^n i_k 2^{n-1-k}.
\end{equation} 
Then, we define the synthetic channels $\{W_N^{(i)}\}_{i\in \{0, \ldots, N-1\}}$ as
\begin{equation}\label{eq:syntchan}
W_N^{(i)} = (((W^{i_1})^{i_2})^{\cdots})^{i_n}.
\end{equation}

\begin{example}[Synthetic Channel]
Take $n=4$ and $i=10$. Then, the synthetic channel $W_{16}^{(10)} = (((W^{1})^{0})^{1})^{0}$ is obtained by applying first \eqref{eq:plus}, then \eqref{eq:minus}, then \eqref{eq:plus}, and finally \eqref{eq:minus}.
\end{example}




\subsection{Partial Order}

In order to describe the partial order, it is helpful to define two operators, i.e., the addition and the left-swap operator, that map the index of a synthetic channel into the index of another synthetic channel. 

\begin{definition}[Addition Operator]\label{def:add}
Let $i\in \{0, \ldots, N-1\}$ and denote by $(i_1, i_2, \cdots, i_n)$ its binary expansion over $n$ bits, defined in \eqref{eq:defbinexp}. Given $k\in \{1, \ldots, n\}$, the \emph{addition operator} at position $k$ maps $i$ into $A^{(k)}(i)\in \{0, \ldots, N-1\}$. The binary expansion over $n$ bits of $A^{(k)}(i)$ is defined as 
\begin{equation}
(A^{(k)}(i))_{\ell} = 
\begin{cases}
1, & \ell=k, \\
i_\ell, & \ell \neq k. 
\end{cases}
\end{equation}
\end{definition}

In words, the addition operator $A^{(k)}$ takes the input $i$ and sets to $1$ the $k$-th of its binary expansion. Note that, if $i_k=1$, the addition operation $A^{(k)}$ simply copies the input into the output.

\begin{example}[Addition Operator]\label{ex:add}
Take $n=4$ and $i =10$. Note that $i$ has binary expansion $(1, 0, 1, 0)$. Then, $A^{(2)}(10)=14$ and its binary expansion is $(1, 1, 1, 0)$. Furthermore, $A^{(3)}(10)=10$ and its binary expansion is $(1, 0, 1, 0)$.
\end{example}

\begin{definition}[Left-swap Operator]\label{def:lw}
Let $i\in \{0, \ldots, N-1\}$ and denote by $(i_1, i_2, \cdots, i_n)$ its binary expansion over $n$ bits, defined in \eqref{eq:defbinexp}. Given $k\in \{2, \ldots, n\}$, the \emph{left-swap operator} at position $k$ maps $i$ into $L^{(k)}(i)\in \{0, \ldots, N-1\}$. If $i_k \neq 1$ {\em or} $i_{k-1} \neq 0$ then $L^{(k)}(i)=i$. Otherwise, the binary expansion over $n$ bits of $L^{(k)}(i)$ is defined as
\begin{equation}
(L^{(k)}(i))_\ell = 
\begin{cases}
1, & \ell=k-1, \\
0, & \ell=k, \\
i_\ell, & \ell \not \in \{k-1, k\}. 
\end{cases}
\end{equation}
\end{definition}

In words, the left-swap operator $L^{(k)}$ takes the input and, if possible, it swaps the $1$ in the $k$-th position with the bit on its left. This means that, if $i_k = 1$ {\em and} $i_{k-1} = 0$, the left-swap operator $L^{(k)}$ swaps position $k-1$ with position $k$. Otherwise, it simply copies the input into the output.

\begin{example}[Left-swap Operator]\label{ex:shift}
Take $n=4$ and $i =10$. Note that $i$ has binary expansion $(1, 0, 1, 0)$. Then, $L^{(2)}(10)=10$ and its binary expansion is $(1, 0, 1, 0)$. Furthermore, $L^{(3)}(10)=12$ and its binary expansion is $(1, 1, 0, 0)$.
\end{example}

We are now ready to describe the partial order introduced in \cite{Sc16, BDOT16}.

\begin{proposition}[Partial Order]\label{prop:order}
Let $W$ be a BMS channel and consider the $N=2^n$ synthetic channels $\{W_N^{(i)}\}_{i\in \{0, \ldots, N-1\}}$ obtained from $W$ by applying \eqref{eq:syntchan}. Then, for any $i \in \{0, \ldots, N-1\}$, 
\begin{align}
W_N^{(i)} & \preceq W_N^{(A^{(k)}(i))}, \quad \forall \hspace{0.15em}k\in \{1, \ldots, n\},\label{eq:add} \\
W_N^{(i)} & \preceq W_N^{(L^{(k)}(i))}, \quad \forall \hspace{0.15em}k\in \{2, \ldots, n\}.\label{eq:shift}
\end{align}
\end{proposition}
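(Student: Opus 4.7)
The plan is to reduce both inequalities to basic local comparisons at one or two consecutive polarization steps and then lift them through all remaining steps. The common tool is a \emph{preservation lemma}: if $W_1\preceq W_2$ are BMS channels, then $W_1^b\preceq W_2^b$ for each $b\in\{0,1\}$. This follows directly from the definitions \eqref{eq:minus}--\eqref{eq:plus} together with the definition of stochastic degradation, by combining the conditional distribution that witnesses $W_1\preceq W_2$ with the polarization transform to build a witness for the transformed pair. Iterating, any fixed sequence of polarization operations preserves stochastic degradation, so it always suffices to establish a local comparison at the first position where the indices of two synthetic channels differ.

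For the addition operator \eqref{eq:add}, fix $k$ and assume $i_k=0$, otherwise $A^{(k)}(i)=i$ and the claim is trivial. Let $W^\star$ denote the intermediate channel built from $W$ by applying $i_1,\ldots,i_{k-1}$; this is common to both $W_N^{(i)}$ and $W_N^{(A^{(k)}(i))}$. At step $k$, the former branches to $(W^\star)^0$ and the latter to $(W^\star)^1$. The basic comparison $(W^\star)^0\preceq(W^\star)^1$ is a standard polarization fact that follows by transitivity from $(W^\star)^0\preceq W^\star\preceq(W^\star)^1$. The preservation lemma applied to the remaining operations $i_{k+1},\ldots,i_n$ then yields \eqref{eq:add}.

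For the left-swap operator \eqref{eq:shift}, the only nontrivial case is $i_{k-1}=0$ and $i_k=1$. Let $W^\star$ be the intermediate channel after applying $i_1,\ldots,i_{k-2}$, which is common to both $i$ and $L^{(k)}(i)$. The preservation lemma reduces the proof to the single local comparison
\[
((W^\star)^0)^1 \;\preceq\; ((W^\star)^1)^0.
\]
This is the main obstacle of the whole proposition: both sides are built from four copies of the same base channel but with the two polarization transforms composed in opposite orders, so the degradation cannot be read off from \eqref{eq:minus}--\eqref{eq:plus} in a one-shot manner. To establish it, I would write the joint output distributions explicitly in terms of the four input bits and the four base-channel outputs, and then exhibit a stochastic map from the output of $((W^\star)^1)^0$ to the output of $((W^\star)^0)^1$ whose marginal conditional distributions match those prescribed by the degradation relation. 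Equivalently, one identifies a sufficient statistic of the output of $((W^\star)^1)^0$ that is at least as informative about the input as the output of $((W^\star)^0)^1$. This combinatorial construction is precisely what is carried out in \cite{Sc16, BDOT16}, to which the proof can ultimately appeal once the above reduction is in place.
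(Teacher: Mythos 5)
Your proposal is correct in structure, and it is worth noting that the paper itself offers no proof of this proposition at all: it simply points to Section V of \cite{MoTCommLett09} for \eqref{eq:add} and to Theorem 1 of \cite{Sc16} for \eqref{eq:shift}. Your reduction scaffolding is the right one and is indeed how those references proceed: the preservation lemma ($W_1\preceq W_2$ implies $W_1^b\preceq W_2^b$, proved by composing the witnessing degrading channel coordinatewise with the polar transform) lets you localize the comparison to the first index position where $i$ and its image differ, and for \eqref{eq:add} the local fact $(W^\star)^0\preceq W^\star\preceq (W^\star)^1$ finishes the job (note that the left inequality $W^0\preceq W$ uses the symmetry of the BMS channel, via the output involution that converts an output of $W(\cdot\mid x)$ into one of $W(\cdot\mid x\oplus 1)$; this is worth flagging since it is the one place where the BMS hypothesis enters). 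The one caveat is that the local comparison $((W^\star)^0)^1\preceq((W^\star)^1)^0$ that you isolate for \eqref{eq:shift} is not a routine "write out the joint distribution and exhibit a stochastic map" exercise --- it is the entire mathematical content of Theorem 1 of \cite{Sc16}, and the known arguments (an explicit intermediate channel construction in \cite{Sc16}, a symmetrization argument in \cite{BDOT16}) are genuinely delicate. Since you ultimately appeal to those references for exactly that step, your proposal is neither more nor less complete than the paper's own treatment, but it correctly identifies and supplies the reduction that the paper leaves implicit.
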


For the proof of \eqref{eq:add}, see Section V of \cite{MoTCommLett09} and, for the proof of \eqref{eq:shift}, see Theorem 1 of \cite{Sc16}. Note also that the partial order \eqref{eq:shift} was first implicitly pointed out in \cite{ArT09}. Furthermore, observe that \eqref{eq:add} and \eqref{eq:shift} hold for any BMS channel $W$. For this reason, we say that the partial order of Proposition \ref{prop:order} is \emph{universal}. 

\begin{example}[Partial Order]
Take $n=4$ and $i =10$. By applying Proposition \ref{prop:order} and recalling Examples \ref{ex:add} and \ref{ex:shift}, we immediately conclude that $W_{16}^{(10)}\preceq W_{16}^{(12)}$ and $W_{16}^{(10)}\preceq W_{16}^{(14)}$.  
\end{example}


\subsection{Construction Problem}

Given a BMS channel $W$ and a block length $N$, the problem of the construction of polar codes consists in selecting the set of the most reliable synthetic channels defined as in \eqref{eq:syntchan}. According to Definition \ref{def:reliable}, there are several notions of reliability. Since all these reliability measures become worse under stochastic degradation by Proposition \ref{prop:degrrel}, it does not really matter which one we choose. To fix the ideas, let us consider the Bhattacharyya parameter and define the construction problem to be the selection of the set of synthetic channels with the lowest Bhattacharyya parameters. However, keep in mind that the arguments and the conclusions of this paper remain valid when we choose the mutual information or the MAP error probability as reliability measures. Indeed, in this paper we exploit the partial order of Proposition \ref{prop:order}, which is an ordering of the synthetic channels in the sense of the stochastic degradation. 

\begin{definition}[Construction Problem]\label{def:constr}
Let $W$ be a BMS channel and consider the $N=2^n$ synthetic channels $\{W_N^{(i)}\}_{i\in \{0, \ldots, N-1\}}$ obtained from $W$ by applying \eqref{eq:syntchan}. In order to construct a polar code of block length $N$, we need to solve either the {\em fixed rate (FR) problem} or the {\em fixed performance (FP) problem} that are defined as follows.
\begin{itemize}

\item \emph{Fixed rate (FR) problem.} Given a block length $N$ and a rate $R\in (0, 1)$,
output the set of $\lfloor N R \rfloor$ synthetic channels with the smallest Bhattacharyya parameters. 

\item \emph{Fixed performance (FP) problem.} Given a block length $N$ and a
threshold $\gamma\in (0,1)$, output all the synthetic channels
whose Bhattacharyya parameter is smaller than $\gamma$.

\end{itemize}
\end{definition}

In the sequel we will limit our discussion to the FP construction problem. Note that if we can solve the FP construction problem, we can also solve the FR construction problem by simply performing a bisection on the values of the threshold.

\section{Statement of the Main Result} \label{sec:main}

\begin{theorem}[Complexity of FP Construction Problem]\label{th:main}
Let $W$ be a BMS channel and $N=2^n$ be the block length. Let $M(n)$ be the maximal number of subsets of $\{1,\ldots,n\}$ that share the same sum. Consider the partial order of Proposition \ref{prop:order} and use it to solve the FP construction problem with threshold $\gamma$ of Definition \ref{def:constr}. Then, the complexity of such a task can be bounded as follows.

\begin{itemize}
\item {\em Upper bound:} it suffices to compute the Bhattacharyya parameter of \emph{at most} $$M(n) \cdot \log \left(\frac{2^{n+1}}{M(n)}\right)$$ synthetic channels, for any $\gamma\in (0, 1)$.

\item {\em Lower bound:} it is necessary to compute the Bhattacharyya parameter of \emph{at least} $M(n)$ synthetic channels, for some $\gamma\in (0, 1)$.
\end{itemize}
\end{theorem}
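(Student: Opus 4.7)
My plan is to establish both bounds via a graded-poset analysis of the partial order of Proposition~\ref{prop:order}.

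\textbf{A rank function.} I first introduce
\begin{equation*}
r(i) \;=\; \sum_{k=1}^{n}(n+1-k)\, i_k .
\end{equation*}
A direct calculation shows that whenever $A^{(k)}$ acts non-trivially on $i$ it increases $r$ by $n+1-k\ge 1$, and whenever $L^{(k)}$ acts non-trivially it increases $r$ by exactly $1$. Because the partial order of Proposition~\ref{prop:order} is generated by these two operators, $i$ strictly preceding $j$ forces $r(i)<r(j)$; in particular each level set $L_s:=\{i:r(i)=s\}$ is an antichain. The reindexing $k\mapsto n+1-k$ bijects subsets of $\{1,\ldots,n\}$ weighted by $r$ onto subsets weighted by the ordinary sum $\sum_{k\in S}k$, so $\max_s|L_s|=M(n)$.

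\textbf{Lower bound.} Fix $s^\star$ with $|L_{s^\star}|=M(n)$ and set $\mathcal{A}=L_{s^\star}$. I would then argue that, for a suitable $W$ and $\gamma$, any procedure that only uses Bhattacharyya queries together with the partial order of Proposition~\ref{prop:order} must evaluate $Z(W_N^{(i)})$ at every $i\in\mathcal{A}$. Since the elements of $\mathcal{A}$ are pairwise incomparable, any $\{0,1\}$-labeling of $\mathcal{A}$ (``below~$\gamma$'' vs ``above~$\gamma$'') is compatible with the partial order, so the partial order alone provides no information on the status of any un-queried $i\in\mathcal{A}$. An adversary argument, combined with the richness of BMS channels (so that each labeling is realized by some $W$), then forces at least $M(n)$ queries in the worst case.

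\textbf{Upper bound.} The plan is to partition $\{0,\ldots,N-1\}$ into exactly $M(n)$ chains and to binary-search the $\gamma$-threshold on each. For the decomposition I would work one level at a time: between consecutive level sets $L_s$ and $L_{s+1}$ the cover relation defines a bipartite graph, and I would verify a Hall-type condition guaranteeing a matching saturating the smaller side. Splicing these matchings produces a family of chains whose cardinality equals $\max_s|L_s|=M(n)$; equivalently, this shows that the maximum antichain has size exactly $M(n)$ (the inequality $\ge$ comes from $\mathcal{A}$; the reverse from the chain cover via Dilworth's theorem). With the chains in hand, let $\ell_t$ denote the length of the $t$-th chain. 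By Propositions~\ref{prop:degrrel} and~\ref{prop:order}, $Z(W_N^{(i)})$ is non-increasing along every chain, so $\{i\in C_t: Z(W_N^{(i)})<\gamma\}$ is a suffix; a standard binary search locates it with at most $\lceil\log_2(\ell_t+1)\rceil\le \log_2(2\ell_t)$ evaluations. Summing over chains, using $\sum_t\ell_t=2^n$ and Jensen's inequality applied to the concave function $\log_2$,
\begin{equation*}
\sum_{t=1}^{M(n)}\log_2(2\ell_t)\;\le\;M(n)\,\log_2\!\left(\frac{2\cdot 2^n}{M(n)}\right)\;=\;M(n)\,\log_2\!\left(\frac{2^{n+1}}{M(n)}\right),
\end{equation*}
which matches the announced bound.

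\textbf{Main obstacle.} The genuine combinatorial content lies in the chain decomposition into exactly $M(n)$ chains, i.e., in the Hall-type matching condition between consecutive rank levels (equivalently, in establishing that the poset is Sperner with Sperner number $M(n)$). The rank function, the adversary argument, and the binary-search bookkeeping are all routine by comparison.
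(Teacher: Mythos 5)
Your overall architecture coincides with the paper's own proof: a sum-type rank function whose level sets are antichains (with maximum level size $M(n)$) gives the lower bound, and a cover by $M(n)$ chains combined with per-chain binary search and Jensen's inequality gives the upper bound. The bookkeeping is correct throughout --- the rank increments under $A^{(k)}$ and $L^{(k)}$, the fact that $\{i\in C_t : Z(W_N^{(i)})<\gamma\}$ is a suffix of each chain, the $\lceil \log_2(\ell_t+1)\rceil$ query count, and the Jensen step all check out.

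The gap is exactly where you flag it, and it is not a step you can defer as a ``Hall-type condition to be verified.'' Partitioning the poset into exactly $M(n)$ chains is equivalent, via Dilworth's theorem, to showing that the maximum antichain has cardinality $M(n)$, i.e., that the poset is Sperner with the maximum attained on a rank level; your proposed matching between consecutive rank levels \emph{is} that statement, not a route to it, and no direct combinatorial verification of the relevant normalized-matching/Hall condition is known for this poset. Note that rank symmetry and rank unimodality alone do not imply the Sperner property. The paper closes this step by invoking Theorem 4.1 of \cite{St91} for the poset $(\mathcal{P}([n]),\prec_2)$ (shown order-isomorphic to $(\{0,1\}^n,\prec_1)$ in Lemma \ref{lemma:iso}), an algebraic result of Stanley whose known proofs go through linear-algebraic and representation-theoretic arguments rather than an explicit matching. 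Separately, a minor point on the lower bound: what is needed is that an adversary can answer queries consistently with any up-set labeling of the unqueried antichain elements within the stated information model (``use only the partial order of Proposition \ref{prop:order}''); your appeal to every labeling being realized by an actual BMS channel $W$ is a stronger claim than necessary and is itself left unproven.
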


The upper and the lower bounds provided by Theorem \ref{th:main} are represented in Figure \ref{fig:bounds} for $n\in \{6, 7, \ldots, 24\}$.

Let us point out that the results above are \emph{universal} in the sense that they hold for any BMS channel $W$. Note also that the upper bound holds for any choice of the threshold $\gamma\in (0, 1)$. On the contrary, the lower bound holds for some $\gamma\in (0, 1)$. Indeed, for some specific values of the threshold, the task might be easier. For example, if $\gamma$ is very small (e.g. $\gamma< Z(W)^N$), then none of the synthetic channels have a Bhattacharyya parameter smaller than $\gamma$. Similarly, if $\gamma$ is very large (e.g., $\gamma > 1-(1-Z(W))^N$), then all the synthetic channels have a Bhattacharyya parameter smaller than $\gamma$. Hence, it is interesting to provide a lower bound for the ``hard'' instances of $\gamma$.

\begin{figure}
\includegraphics[width = \columnwidth]{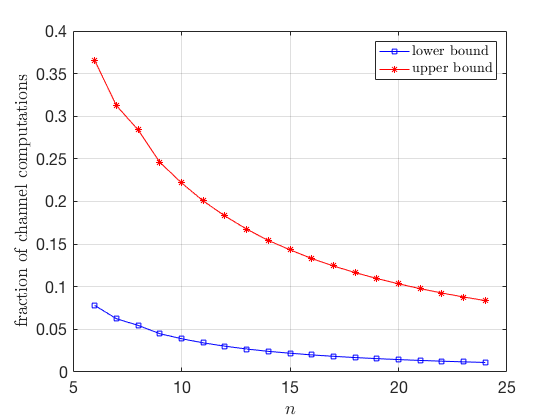}
\caption{Upper and lower bound on the fraction of channels whose Bhattacharyya parameter has to computed in order to solve the FP construction problem.}\label{fig:bounds}
\vspace{-.5cm}
\end{figure}

The sequence $M(n)$ is the integer sequence A025591 in \cite{OEISref}. The following lemma, stated below and proved in Appendix \ref{app:asym}, provides an asymptotic formula for it. 

\begin{theorem}[Asymptotic Formula for $M(n)$]\label{th:asym}
Let $M(n)$ be the maximal number of subsets of $\{1,\ldots,n\}$ that share the same sum. Then,
\begin{equation}\label{eq:asympt}
M(n) = \sqrt{\frac{6}{\pi}}\frac{2^n}{n^{3/2}}(1 +o(1)).
\end{equation}
\end{theorem}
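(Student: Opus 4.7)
The plan is to reinterpret $M(n)$ probabilistically and then invoke a local central limit theorem (LCLT). For each $k\in\{1,\ldots,n\}$, let $X_k$ be an independent random variable with $\mathbb{P}(X_k=0)=\mathbb{P}(X_k=k)=1/2$, and set $S_n=X_1+\cdots+X_n$. Each subset of $\{1,\ldots,n\}$ corresponds to an outcome of $(X_1,\ldots,X_n)$ of probability $2^{-n}$, so the number of subsets summing to $s$ equals $2^n\,\mathbb{P}(S_n=s)$, and hence $M(n)=2^n\,\max_s\mathbb{P}(S_n=s)$. Proving \eqref{eq:asympt} therefore reduces to showing that the maximal point probability of $S_n$ has the asymptotic $\sqrt{6/\pi}\,n^{-3/2}(1+o(1))$.

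I would next compute the moments $\mathbb{E}[S_n]=\sum_{k=1}^n k/2=n(n+1)/4$ and $\sigma_n^2:=\mathrm{Var}(S_n)=\sum_{k=1}^n k^2/4=n(n+1)(2n+1)/24$, so that $\sigma_n\sim n^{3/2}/(2\sqrt{3})$. An LCLT applied to $S_n$ would yield, uniformly in $s$,
\[
\mathbb{P}(S_n=s)=\frac{1}{\sigma_n\sqrt{2\pi}}\exp\!\left(-\frac{(s-\mathbb{E}[S_n])^2}{2\sigma_n^2}\right)+o(\sigma_n^{-1}),
\]
which is maximized at $s$ closest to $\mathbb{E}[S_n]$ and gives $\max_s\mathbb{P}(S_n=s)\sim 1/(\sigma_n\sqrt{2\pi})=\sqrt{6/\pi}\,n^{-3/2}$. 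Multiplying by $2^n$ yields exactly \eqref{eq:asympt}.

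The remaining task is to establish the LCLT for the non-identically-distributed array $\{X_k\}_{1\le k\le n}$. I would use Fourier inversion,
\[
\mathbb{P}(S_n=s)=\frac{1}{2\pi}\int_{-\pi}^{\pi}e^{-ist}\prod_{k=1}^n\frac{1+e^{ikt}}{2}\,dt,
\]
split the integration range into an inner region $|t|\le \sigma_n^{-1+\epsilon}$ and its complement, and carry out a second-order Taylor expansion of $\sum_k\log\phi_k(t)$ on the inner region (the cubic remainder is controlled since $t^3\sum_k k^3=O(t^3 n^4)=o(1)$ there), reproducing the desired Gaussian integral. The main obstacle is the tail bound on the outer region: the modulus of the integrand equals $\prod_{k=1}^n|\cos(kt/2)|$, which is \emph{not} uniformly bounded away from $1$, since for any fixed $t\neq 0$ individual factors with $kt/2$ near a multiple of $\pi$ can be close to $1$. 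The key lemma I expect to need is an equidistribution-type statement: for every $t$ bounded away from $0$ modulo $2\pi$, a positive fraction of the indices $k\in\{1,\ldots,n\}$ satisfy $|\cos(kt/2)|\le 1-c$ for an absolute constant $c>0$, so that the full product decays as $e^{-\Omega(n)}$. This easily dominates the polynomial prefactors and kills the tail contribution, completing the LCLT and hence the theorem.
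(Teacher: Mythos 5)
Your reduction of $M(n)$ to a local central limit theorem is correct and, as far as the overall strategy goes, would work --- but it is a genuinely different route from the paper's. The paper does essentially no analysis: it observes that $M(n)$ is nondecreasing in $n$, restricts to the subsequence $n\equiv 0$ or $3 \pmod 4$ where $n(n+1)/4$ is an integer, uses complementation to identify the subsets of sum $n(n+1)/4$ with the signings $\pm 1\pm 2\cdots\pm n=0$, and then cites the proof in \cite{Su13} of the Andrica--Tomescu conjecture for the asymptotics of that signed-sum count. Your proposal, by contrast, is a self-contained proof: the identification $M(n)=2^n\max_s\mathbb{P}(S_n=s)$ with $X_k\in\{0,k\}$ uniform, the moment computations $\mathbb{E}[S_n]=n(n+1)/4$ and $\sigma_n^2=n(n+1)(2n+1)/24\sim n^3/12$, and the constant $1/(\sigma_n\sqrt{2\pi})=\sqrt{6/\pi}\,n^{-3/2}$ are all right, and the inner-region Taylor expansion closes (the cubic error $t^3\sum_k k^3=O(n^{4-\frac{9}{2}(1-\epsilon)})$ is $o(1)$ for small $\epsilon$). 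What your approach buys is independence from the citation and validity for every $n$ directly, without the monotonicity/subsequence device; what it costs is that you must actually prove the LCLT for a triangular array, which is exactly the hard part the paper outsources.

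The one place where your sketch, taken literally, does not close is the outer region of the Fourier integral. Your key lemma is stated for $t$ \emph{bounded away from} $0$ modulo $2\pi$, and there it is true and elementary (for $\alpha=t/(2\pi)\in[\delta/2\pi,1/2]$, two consecutive indices cannot both have $\|k\alpha\|$ small, so at least half the indices satisfy $|\cos(k t/2)|\le 1-c(\delta)$, giving $e^{-\Omega(n)}$). But the complement of your inner region also contains the intermediate range $\sigma_n^{-1+\epsilon}\le|t|\le\delta$, where $t\to 0$ and that lemma gives only $|\cos(kt/2)|\le 1-c t^2$ for a positive fraction of $k$, hence a bound of the useless form $\exp(-cnt^2)$. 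You need the sharper estimate there: for $|t|\le \pi/n$ all factors satisfy $|\cos(kt/2)|\le\exp(-ck^2t^2)$, so the product is at most $\exp(-c' n^3t^2)\le\exp(-c' n^{3\epsilon})$ on $[\sigma_n^{-1+\epsilon},\pi/n]$, and for $\pi/n\le|t|\le\delta$ a block decomposition of $\{1,\dots,n\}$ into runs of length $\asymp 1/|t|$ again yields $e^{-\Omega(n)}$. With that three-way split the argument is complete; without it, the middle range is uncontrolled.
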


By applying Theorem \ref{th:main} and \ref{th:asym}, we immediately conclude that, in order to solve the FP construction problem, we need to compute the Bhattacharyya parameter of roughly $N/\log^{3/2} N$ synthetic channels. Furthermore, the upper and the lower bound differ by a multiplicative factor of $\log (2N/M(n))$, which scales as $\log \log N$. In words, this means that we need to compute the Bhattacharyya parameter of a sublinear number of channels. This is possible only because we exploit the partial order of Proposition \ref{prop:order}.

\begin{figure*}
\begin{subfigure}[t]{\textwidth}
\centering
\setlength{\unitlength}{1.0bp}%
\begin{picture}(235,28)(0,0)
\put(-95,-32){\includegraphics[angle=90,scale=1.35]{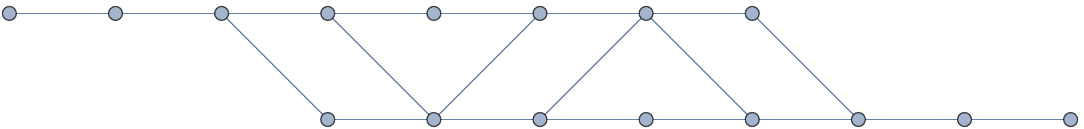}}
{
\scriptsize
\put (-110,18)  {$(0, 0, 0, 0)$}
\put (-67.5,2) {$(0, 0, 0, 1)$}
\put (-25,18) {$(0, 0, 1, 0)$}
\put (14.5, 2) {$(0, 1, 0, 0)$}
\put (56,18) {$(1, 0, 0, 0)$}
\put (97.5,2) {$(1, 0, 0, 1)$}
\put (139,18) {$(1, 0, 1, 0)$}
\put (180.5,2) {$(1, 1, 0, 0)$}
\put (14.5,-38)  {$(0, 0, 1, 1)$}
\put (56,-22) {$(0, 1, 0, 1)$}
\put (97.5,-38) {$(0, 1, 1, 0)$}
\put (139,-22) {$(0, 1, 1, 1)$}
\put (180.5,-38) {$(1, 0, 1, 1)$}
\put (222,-22) {$(1, 1, 0, 1)$}
\put (263.5,-38) {$(1, 1, 1, 0)$}
\put (305,-22) {$(1, 1, 1, 1)$}
}
\end{picture}
\vspace{4.5em}
\caption{$n=4$}\label{fig:hasse4}
\end{subfigure}
\vspace{7em}

\begin{subfigure}[t]{\textwidth}
\centering
\setlength{\unitlength}{1.0bp}%
\begin{picture}(235,28)(0,0)
\put(-128,0){\includegraphics[angle=90,scale=1.57]{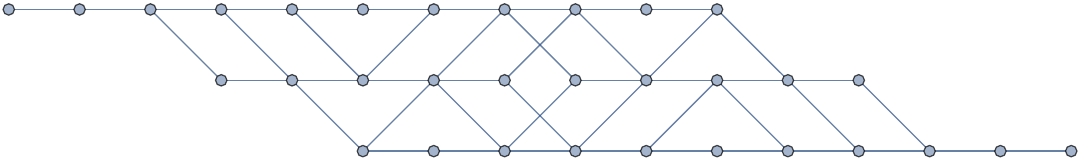}}
{
\scriptsize
\put (-144.5,73)  {$(0, 0, 0, 0, 0)$}
\put (-112.5,57)  {$(0, 0, 0, 0, 1)$}
\put (-80.5,73)  {$(0, 0, 0, 1, 0)$}
\put (-48.5,57)  {$(0, 0, 1, 0, 0)$}
\put (-16.5,73)  {$(0, 1, 0, 0, 0)$}
\put (15.5,57)  {$(1, 0, 0, 0, 0)$}
\put (47.5,73)  {$(1, 0, 0, 0, 1)$}
\put (79.5,57)  {$(1, 0, 0, 1, 0)$}
\put (111.5,73)  {$(1, 0, 1, 0, 0)$}
\put (143.5,57)  {$(1, 1, 0, 0, 0)$}
\put (175.5,73)  {$(1, 1, 0, 0, 1)$}
\put (-48.5,25.5)  {$(0, 0, 0, 1, 1)$}
\put (-16.5,41.5)  {$(0, 0, 1, 0, 1)$}
\put (15.5,25.5)  {$(0, 1, 0, 0, 1)$}
\put (47.5,41.5)  {$(0, 1, 0, 1, 0)$}
\put (79.5,25.5)  {$(0, 1, 1, 0, 0)$}
\put (111.5,41.5)  {$(1, 0, 0, 1, 1)$}
\put (143.5,25.5)  {$(1, 0, 1, 0, 1)$}
\put (175.5,41.5)  {$(1, 0, 1, 1, 0)$}
\put (207.5,25.5)  {$(1, 1, 0, 1, 0)$}
\put (239.5,41.5)  {$(1, 1, 1, 0, 0)$}
\put (15.5,-7)  {$(0, 0, 1, 1, 0)$}
\put (47.5,9)  {$(0, 0, 1, 1, 1)$}
\put (79.5,-7)  {$(0, 1, 0, 1, 1)$}
\put (111.5,9)  {$(0, 1, 1, 0, 1)$}
\put (143.5,-7)  {$(0, 1, 1, 1, 0)$}
\put (175.5,9)  {$(0, 1, 1, 1, 1)$}
\put (207.5,-7)  {$(1, 0, 1, 1, 1)$}
\put (239.5,9)  {$(1, 1, 0, 1, 1)$}
\put (271.5,-7)  {$(1, 1, 1, 0, 1)$}
\put (303.5,9)  {$(1, 1, 1, 1, 0)$}
\put (335.5,-7) {$(1, 1, 1, 1, 1)$}
}
\end{picture}
\vspace{1.5em}
\caption{$n=5$}\label{fig:hasse5}
\end{subfigure}
\caption{Hasse diagram of the partial order over the Hamming cube $\{0, 1\}^n$ induced by Proposition \ref{prop:order}.}\label{fig:hasse}
\end{figure*}

Indeed, assume that we do not use any partial order between the synthetic channels. Then, the only way to solve the FP construction problem is to compute the Bhattacharyya parameter of all the $N$ synthetic channels. On the contrary, suppose that there was a total order among the synthetic channels. Then, we could rank them from best to worst and, by using a binary search algorithm, we need to compute the Bhattacharyya parameter of at most $n+1=\log N +1$ synthetic channels. The main result of this paper is that by using the partial order of Proposition \ref{prop:order} we need to compute the Bhattacharyya parameter of roughly $N/\log^{3/2} N$ synthetic channels. Furthermore, as detailed at the end of Section \ref{sec:proof}, these $N/\log^{3/2} N$ synthetic channels can also be identified efficiently by solving a maximum matching problem on a bipartite graph.

Let us highlight that the bounds of Theorem \ref{th:main} hold when we exploit \emph{only} the partial order of Proposition \ref{prop:order}. This partial order relies on the addition and left-swap operators of Definitions \ref{def:add}-\ref{def:lw}, and these represent the only known operators that imply stochastic degradation. If one finds another operator that implies stochastic degradation, by exploiting the induced partial order, in principle it is possible to further reduce the number of Bhattacharyya parameters to be computed.

\section{Proof and Discussion} \label{sec:proof}

In order to prove Theorem \ref{th:main}, we need some definitions about partially ordered sets (or posets, for short). For a general introduction to the subject of posets, we refer the interested reader to \cite[Chapter 1]{E97} and \cite[Chapter 3]{Sta2011}. 
 
Let us associate the synthetic channel $W_N^{(i)}$ with the binary expansion $(i_1, i_2, \ldots, i_n)$ of the index $i$ defined in \eqref{eq:defbinexp}. Then, the partial order of Proposition \ref{prop:order} induces a partial order over the Hamming cube $\{0, 1\}^n$. We will denote such a partial order by $\prec_1$.

The \emph{Hasse diagram}  of the poset $\{0, 1\}^n$ equipped with the order $\prec_1$ is represented in Figure \ref{fig:hasse} for $n=4$ and $n=5$. Recall that an element $x$ is connected via an edge to an element $y$ {\em if and only if} they are ordered, i.e., $x \prec_1 y$ (respectively, $y \prec_1 x$) {\em and} there is no other element $z$ such that $x \prec_1 z \prec_1 y$ (respectively, $y \prec_1 z \prec_1 x$). In words, the Hasse diagram connects only ``nearest neighbors''.

Let us now define the concepts of chain and antichain that play a central role in our analysis. 

\begin{definition}[Chain and Antichain]\label{def:chain}
Let $P$ be a poset. We say that a subset of $P$ is a \emph{chain} if it is totally ordered. We say that a subset of $P$ is an \emph{antichain} if no two elements in it are comparable. 
\end{definition}

\begin{example}[Chain and Antichain]
Consider the partial order over $\{0, 1\}^4$ whose Hasse diagram is represented in Figure \ref{fig:hasse4}. Define 
\begin{equation*}
\begin{split}
\mathcal{C} &= \{(0, 0, 1, 0), (0, 0, 1, 1), (0, 1, 0, 1), (1, 0, 0, 1) \}, \\
\mathcal{A} &= \{(1, 0, 0, 0), (0, 1, 1, 1)\}.
\end{split}
\end{equation*}
Then, $\mathcal{C}$ is a chain and $\mathcal{A}$ is an antichain. Indeed, the elements $(1, 0, 0, 0)$ and $(0, 1, 1, 1)$ are not comparable and we have that
\begin{equation*}
(0, 0, 1, 0) \prec_1 (0, 0, 1, 1) \prec_1 (0, 1, 0, 1) \prec_1 (1, 0, 0, 1).
\end{equation*}
Analogously, consider the partial order over the set of synthetic channels $\{W_N^{(i)}\}_{i\in \{0, \ldots, N-1\}}$ given by Proposition \ref{prop:order}. Define
\begin{equation*}
\begin{split}
\mathcal{C}' &= \{W_{16}^{(2)}, W_{16}^{(3)}, W_{16}^{(5)}, W_{16}^{(9)}\}, \\
\mathcal{A}' &= \{W_{16}^{(8)}, W_{16}^{(7)}\}.
\end{split}
\end{equation*}
Then, $\mathcal{C}'$ is a chain and $\mathcal{A}'$ is an antichain. Indeed, the synthetic channels $W_{16}^{(8)}$ and $W_{16}^{(7)}$ are not comparable and we have that
\begin{equation*}
W_{16}^{(2)} \prec W_{16}^{(3)} \prec W_{16}^{(5)} \prec W_{16}^{(9)}.
\end{equation*}
\end{example}

The maximum cardinality of an antichain is equal to the minimum number of chains that form a partition of the poset by Dilworth's theorem \cite[Theorem 1.2]{Dil54}, \cite[Theorem 12.5]{BoM08}.

\begin{theorem}[Dilworth]
The minimum number of chains into which the elements of a poset $P$ can be partitioned is equal to the maximum number of elements in an antichain of $P$.
\end{theorem}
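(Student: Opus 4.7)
The plan is to prove the two inequalities of the claimed equality separately. Let $k$ denote the maximum cardinality of an antichain in $P$ and $\ell$ the minimum number of chains needed to partition $P$. The easy direction $\ell \geq k$ is immediate: given any antichain $A$ and any chain decomposition $P = \mathcal{C}_1 \sqcup \cdots \sqcup \mathcal{C}_\ell$, each $\mathcal{C}_j$ can contain at most one element of $A$ (otherwise two incomparable elements would be totally ordered), so $\ell \geq |A|$, and taking a maximum antichain yields the inequality.

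For the non-trivial direction $\ell \leq k$, my plan is to invoke König's theorem on a bipartite auxiliary graph, which also ties in with the algorithmic formulation alluded to at the end of Section~\ref{sec:proof}. I would construct $G = (P^-, P^+, E)$ on two disjoint copies of $P$, placing an edge $(x^-, y^+)$ exactly when $x <_P y$. The central dictionary is that a matching $M$ of $G$ encodes a chain partition of $P$ into $|P| - |M|$ parts: viewing each matched pair as a directed arc $x \to y$ on $P$, one gets an oriented graph in which every vertex has in- and out-degree at most $1$, and since $<_P$ is a strict partial order no oriented cycle can occur, so the arcs decompose $P$ into vertex-disjoint directed paths, each a chain by transitivity.

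It therefore suffices to show that the maximum matching of $G$ has size at least $|P| - k$. By König's theorem this is equivalent to showing that every vertex cover of $G$ has size at least $|P| - k$. Given such a cover $C$, I would set $S = \{x \in P : x^- \in C \text{ or } x^+ \in C\}$, so that $|S| \leq |C|$, and verify that $A := P \setminus S$ is an antichain: any relation $a <_P b$ with distinct $a, b \in A$ would leave the edge $(a^-, b^+)$ uncovered, a contradiction. Hence $k \geq |A| \geq |P| - |C|$, which rearranges to $|C| \geq |P| - k$, giving the required bound on the maximum matching and therefore on $\ell$.

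I expect the main subtlety to be the matching-to-chain-partition dictionary, and in particular the counting identity giving exactly $|P| - |M|$ chains; once that correspondence is in place, König's theorem closes the argument essentially mechanically.
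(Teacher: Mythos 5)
Your proof is correct. Note, however, that the paper does not prove Dilworth's theorem at all: it is quoted as a classical result with citations to Dilworth's original paper and to Bondy--Murty, so there is no in-paper argument to compare against. What you have written is the standard K\"onig-based proof (essentially Fulkerson's argument): the easy inequality $\ell \ge k$ is handled exactly as one would expect, and the reduction of the hard inequality to a maximum matching on the split bipartite graph, together with the complementation of a vertex cover into an antichain, is sound; the counting identity that a matching of size $|M|$ yields $|P|-|M|$ chains follows because the induced arc set is a vertex-disjoint union of directed paths, i.e.\ a forest on $|P|$ vertices with $|M|$ edges. The one implicit hypothesis is finiteness of $P$ (needed both for K\"onig's theorem and for the counting), which is harmless here since the poset in question is $\{0,1\}^n$. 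A pleasant side effect of your route is that it turns the paper's purely algorithmic remark at the end of Section~\ref{sec:proof} --- where the same bipartite graph $G=(U,V,E)$ with $(u,v)\in E$ iff $u \prec_1 v$ is used to \emph{find} a minimum chain partition via maximum matching --- into an actual proof of the min--max equality that the paper only cites, so the two ingredients (Dilworth's theorem and the matching algorithm) are unified rather than treated separately.
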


\begin{example}[Partition into Chains]
Consider the partial order over $\{0, 1\}^4$ whose Hasse diagram is represented in Figure \ref{fig:hasse4}. As the set is not totally ordered, we cannot find a chain that contains all its elements. However, we can find a partition of it into two chains. For example, we can pick the following two chains:
\begin{equation*}
\begin{split}
\mathcal{C}_1 = \{(0, 0, 0, 0), & (0, 0, 0, 1),  (0, 0, 1, 0), (0, 1, 0, 0), \\
(1, 0, 0, 0), & (1, 0, 0, 1), (1, 0, 1, 0), (1, 1, 0, 0)\} \\ 
\mathcal{C}_2 = \{ (0, 0, 1, 1), &  (0, 1, 0, 1), (0, 1, 1, 0), (0, 1, 1, 1), \\
(1, 0, 1, 1), & (1, 1, 0, 1),  (1, 1, 1, 0), (1, 1, 1, 1)\}.
\end{split}
\end{equation*}
Note that this decomposition is not unique, and there are other ways of partitioning the set $\{0, 1\}^n$ into two chains. As predicted by Dilworth's theorem, the maximum cardinality of an antichain is $2$. Indeed, $\mathcal{A} = \{(1, 0, 0, 0), (0, 1, 1, 1)\}$ is an antichain and it is easy to verify that there is no antichain of cardinality $3$.
\end{example}

The following lemma, whose proof is deferred to Appendix \ref{app:size}, characterizes the maximum cardinality of an antichain of the poset $\{0, 1\}^n$ with the order $\prec_1$.


\begin{lemma}[Maximum Cardinality of an Antichain]\label{lemma:sizeanti}
Let $M(n)$ be the maximal number of subsets of $\{1,\ldots,n\}$ that share the same sum. Consider the set $\{0, 1\}^n$ with the partial order $\prec_1$ and let $\mathcal A$ be an antichain. Then, 
\begin{equation}
\max_{\mathcal A}|\mathcal{A}| = M(n),
\end{equation}  
where the maximum is computed over the set of all antichains.
\end{lemma}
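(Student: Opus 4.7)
The strategy is to equip $(\{0,1\}^n, \prec_1)$ with a strict rank function, identify its level sets with subsets of $\{1,\ldots,n\}$ of a fixed sum, and then invoke Dilworth's theorem in both directions. Concretely, I would define
\[
\phi(x) = \sum_{k=1}^n (n+1-k)\,x_k.
\]
A direct check on the two families of generators shows that $\phi$ strictly increases along every cover relation: a nontrivial $A^{(k)}$ raises $\phi$ by $n+1-k \geq 1$, and a nontrivial $L^{(k)}$ raises it by $1$. Hence each level $L_s := \phi^{-1}(s)$ is an antichain. The bijection $x \mapsto \{n+1-k : x_k = 1\}$ between $\{0,1\}^n$ and subsets of $\{1,\ldots,n\}$ sends $\phi(x)$ to the sum of the corresponding subset, so $|L_s|$ equals the number of subsets of $\{1,\ldots,n\}$ with sum $s$, and $\max_s|L_s| = M(n)$. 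Choosing $s^\star$ realizing this maximum yields the lower bound $\max_{\mathcal A}|\mathcal A| \geq M(n)$.

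For the matching upper bound, Dilworth's theorem reduces the task to exhibiting a partition of $\{0,1\}^n$ into $M(n)$ chains. The plan is to build this partition level by level via bipartite matchings. For each pair of consecutive levels $(L_s, L_{s+1})$, consider the bipartite cover graph $B_s$ whose edges are cover relations $x \lessdot y$; each such cover is either an $A^{(n)}$-edge (flipping the last bit from $0$ to $1$) or an $L^{(k)}$-edge (swapping ``$01$'' to ``$10$'' at positions $k-1,k$). I would produce in each $B_s$ a matching that saturates the smaller side and then glue these matchings across levels to obtain disjoint chains. The resulting chain partition has size exactly $\max_s|L_s| = M(n)$, yielding the matching upper bound and in particular the $\leq M(n)$ inequality.

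The crux is verifying Hall's condition (equivalently, the normalized matching property) in each $B_s$: for $s$ below the peak of $|L_s|$ one needs $|\nabla A| \geq |A|$ for every $A \subseteq L_s$, where $\nabla A \subseteq L_{s+1}$ is the image of $A$ under covers, and symmetrically above the peak. I would attack this with a local flow argument: distribute from each $x \in L_s$ unit mass uniformly among its cover-neighbours, use the abundance of left-swap edges to redistribute smoothly, and verify that no $y \in L_{s+1}$ receives more than $|L_s|/|L_{s+1}|$. Unimodality of $(|L_s|)_s$, which follows from the classical log-concavity of the generating function $\prod_{k=1}^n(1+q^k)$, then guarantees that the two-sided matching process yields exactly $\max_s|L_s| = M(n)$ chains. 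An alternative route that I would keep as a backup is a recursive construction of a symmetric chain decomposition in $(\{0,1\}^n,\prec_1)$ from one in $(\{0,1\}^{n-1},\prec_1)$, splitting according to the last bit and using the inter-copy covers $A^{(n)}$ and $L^{(n)}$ to lift and merge chains, collapsing the count from the Boolean-lattice SCD size $\binom{n}{\lfloor n/2\rfloor}$ down to $M(n)$; carrying out either strategy rigorously is the main obstacle.
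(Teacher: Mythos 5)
Your first half is sound and coincides with the paper's argument: the weight $\phi(x)=\sum_{k}(n+1-k)x_k$ is, up to the relabelling $k\mapsto n+1-k$, exactly the rank function the paper puts on the order-isomorphic poset $(\mathcal{P}([n]),\prec_2)$; every nontrivial application of $A^{(k)}$ or $L^{(k)}$ strictly increases it, so each level $L_s$ is an antichain whose cardinality is the number of subsets of $\{1,\ldots,n\}$ with sum $s$, and this gives $\max_{\mathcal A}|\mathcal A|\ge M(n)$.

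The genuine gap is the converse inequality, i.e.\ the Sperner property of this poset, and your proposal does not prove it: it only names two strategies (verifying the normalized matching / Hall condition between consecutive levels by a local flow argument, or building a symmetric chain decomposition recursively) and explicitly defers ``the main obstacle''. This is not a routine verification that can be left as a plan. The poset in question is Stanley's poset of subsets of $[n]$ under dominance (equivalently, partitions into distinct parts at most $n$), and its Sperner property is a deep theorem: the paper itself does not prove it but cites Theorem 4.1 of \cite{St91}, whose proof is algebraic (hard Lefschetz / $\mathfrak{sl}_2$-theory for Peck posets). No elementary combinatorial proof via the normalized matching property is known, and whether this poset admits a symmetric chain decomposition is a long-standing open question; so both of your proposed routes amount to attacking a hard open problem rather than carrying out a standard argument. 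Concretely, your flow argument would have to certify $|\nabla A|\ge |A|\,|L_{s+1}|/|L_s|$ for \emph{every} $A\subseteq L_s$, while each vertex has only $O(n)$ cover-neighbours inside levels of size $\Theta(2^n/n^{3/2})$, and no local redistribution scheme achieving this is known. To close the proof you should instead invoke the Sperner property of $(\mathcal{P}([n]),\prec_2)$ from Stanley's work (as the paper does in Lemma \ref{lemma:sperner}); combined with your correct identification of the level sizes, this yields $\max_{\mathcal A}|\mathcal A|=\max_s|L_s|=M(n)$.
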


We are now ready to prove the main result of this paper.  

\begin{proof}[Proof of Theorem \ref{th:main}]
Consider the set of synthetic channels $\{W_N^{(i)}\}_{i\in \{0, \ldots, N-1\}}$ with the partial order given by Proposition \ref{prop:order}. Let $\mathcal C' \subseteq \{W_N^{(i)}\}$ be a chain. By Definition \ref{def:chain}, $\mathcal C'$ is totally ordered. Hence, in order to establish which elements of $\mathcal{C}'$ have a Bhattacharyya parameter smaller than $\gamma$, we can use a binary search algorithm, which requires the computation of at most $\lfloor \log |\mathcal C'|+1\rfloor$ Bhattacharyya parameters.

Let $(\mathcal{C}'_1, \ldots, \mathcal{C}'_K)$ be a partition of $\{W_N^{(i)}\}_{i\in \{0, \ldots, N-1\}}$ into a minimum number of chains. Clearly, the FP construction problem is equivalent to the problem of establishing which elements of $\mathcal{C}'_i$ have a Bhattacharyya parameter smaller than $\gamma$ for all $i\in \{1, \ldots, K\}$. In order to solve this last problem, the number of Bhattacharyya parameters to be computed is bounded as follows:
\begin{equation}\label{eq:proof1}
\begin{split}
\sum_{i=1}^K \left\lfloor\log |\mathcal C'_i|+1\right\rfloor &\le \sum_{i=1}^K \left(\log |\mathcal C'_i|+1\right) \\
&= K \cdot \left(1+\sum_{i=1}^K \frac{1}{K} \log |\mathcal C'_i|\right)\\
& \stackrel{\mathclap{\mbox{\footnotesize(a)}}}{\le} K \cdot \left(1+  \log \sum_{i=1}^K \frac{1}{K} |\mathcal C'_i|\right)\\
& = K \cdot \left(1+  \log \frac{2^n}{K} \right) = K \cdot \log \frac{2^{n+1}}{K},
\end{split}
\end{equation}
where the inequality (a) is an application of Jensen's inequality.

Let $\mathcal{A}'$ be an antichain. By Definition \ref{def:chain}, every pair of synthetic channels in $\mathcal{A}'$ is not comparable. Hence, in order to solve the FP construction problem, we necessarily need to compute the Bhattacharyya parameter of all the elements of $\mathcal{A}'$. By considering an antichain of maximum cardinality, we conclude that we need to compute at least
\begin{equation}\label{eq:proof2}
|\mathcal{A}'| = K
\end{equation} 
Bhattacharyya parameters, where the equality comes from Dilworth's theorem.

The set $\{W_N^{(i)}\}_{i\in \{0, \ldots, N-1\}}$ with the partial order given by Proposition \ref{prop:order} is order-isomorphic to the set $\{0, 1\}^n$ with the partial order $\prec_1$. Hence, by applying Lemma \ref{lemma:sizeanti}, we have that 
\begin{equation}\label{eq:proof3}
K = M(n).
\end{equation}
By combining \eqref{eq:proof1}, \eqref{eq:proof2}, and \eqref{eq:proof3}, the thesis immediately follows. 
\end{proof}

The result that we have just proved tightly bounds the \emph{number} of synthetic channels whose Bhattacharyya parameter has to be computed in order to solve the FP construction problem. Let us now describe \emph{how to find} these synthetic channels. 

Following the reasoning of the proof above, in order to solve the FP construction problem, we need to find a partition into chains of the set of synthetic channels. Then, for each chain, we establish which of the synthetic channels is reliable via a binary search algorithm. It remains to discuss how to find the partition into chains. 

Consider the bipartite graph $G = (U, V, E)$, where $U = V = \{0, 1\}^n$ and where $(u, v)$ is an edge in $G$ if and only if $u \prec_1 v$. The graph $G$ is represented in Figure \ref{fig:bipartite} for $n=4$. Recall that a matching is a set of edges without common vertices. Given a matching $M$ containing $m$ edges, we can associate to it the partition of $\{0, 1\}^n$ defined as follows: for each edge $(x,y)$ in $M$, include $x$ and $y$ in the same subset.

Suppose that $x$ and $y$ belong to the same subset. Then, there are two possibilities: either $(x,y)\in M$, which implies that $x\prec_1 y$; or there exists a set of intermediate vertices $z_1, \ldots, z_k$ such that $(x, z_1), (z_1, z_2), \ldots, (z_{k-1}, z_k), (z_k, y)\in M$, which implies that $x\prec_1 z_1 \prec_1 \cdots \prec_1 z_k \prec_1 y$. In both cases, $x$ and $y$ are comparable. Hence, $P$ is a partition of $\{0, 1\}^n$ into chains. Note also that the partition $P$ contains $|U|-|M|=2^n-m$ chains.

Similarly, given a partition $P$ of $\{0, 1\}^n$ into $p$ chains, we can associate to it the set of edges $M$ defined as follows. For $i\in \{1, \ldots, p\}$, let $P_i = \{x^{(i)}_1, \ldots, x^{(i)}_k\}$ be a chain of $P$. Then, we include the edges $(x^{(i)}_1, x^{(i)}_2), \ldots, (x^{(i)}_{k-1}, x^{(i)}_k)$ in $M$. Clearly, $M$ is a matching and it contains $|U|-|P|=2^n-p$ edges. 

In conclusion, we have described a way to associate the matchings of the graph $G$ to the partitions of $\{0, 1\}^n$ into chains and vice versa. Therefore, in order to find the partition of $\{0, 1\}^n$ containing the smallest number of chains, it suffices to find a maximum matching for the bipartite graph $G$. The last one is a classical problem in graph theory and it can be solved, e.g., via the Ford-Fulkerson algorithm in $O(|U|\cdot |E|) \le O(N^3)$ \cite{FF56} or via the Hopcroft-Karp algorithm in $O(\sqrt{|U|}\cdot |E|) \le O(N^{5/2})$ \cite{HK73}.

\begin{figure*}[t] 
\centering
\setlength{\unitlength}{1bp}%
\begin{picture}(0,0)(100,100)
\put(-210,120){\includegraphics[angle=270,scale=0.65]{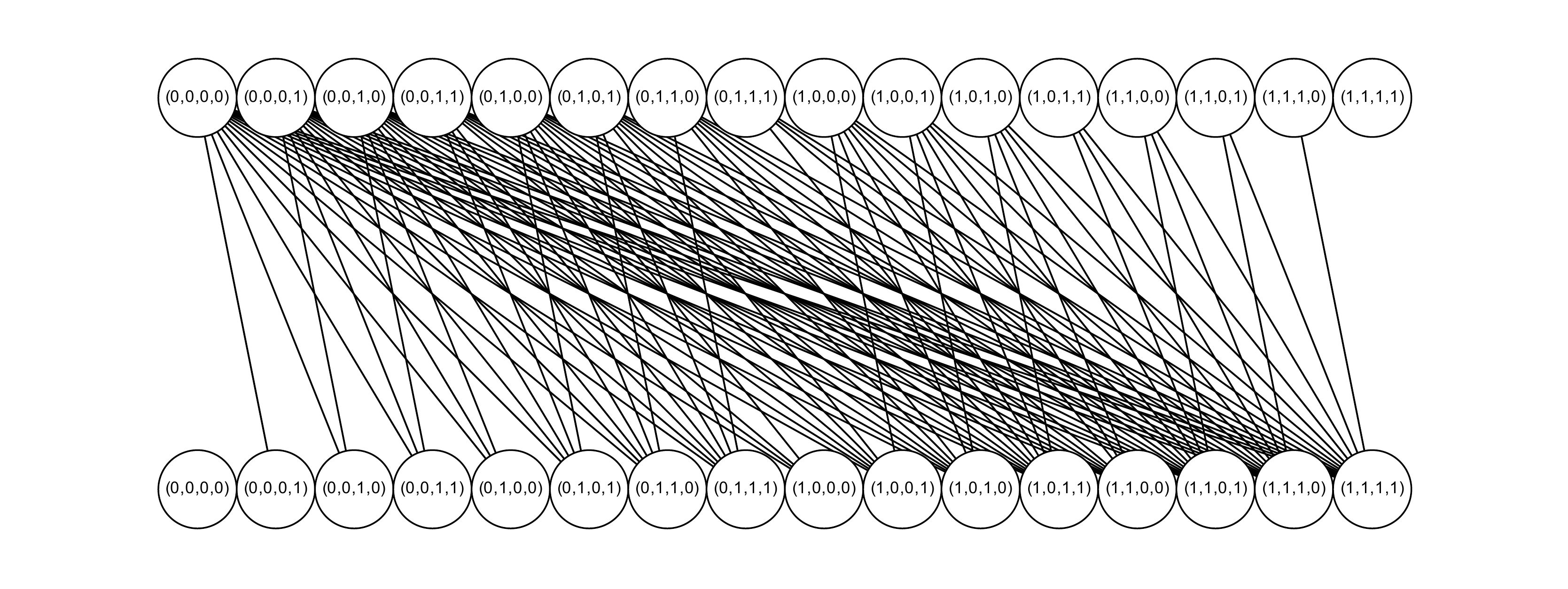}}
\end{picture}
\vspace{21em}
\caption{Bipartite graph $G = (U, V, E)$, where $U = V = \{0, 1\}^4$ and where $(u, v)$ is an edge in $G$ if and only if $u \prec_1 v$. \label{fig:bipartite}}
\end{figure*}

\section{Conclusions}\label{sec:concl}

In this work, we consider the problem of constructing a polar code of block length $N=2^n$ and we show that, by taking advantage of the partial order described in \cite{Sc16, BDOT16}, we need to compute the reliability of roughly a fraction $1/\log^{3/2} N$ of the synthetic channels. Note that this result holds regardless of the method used to compute the Bhattacharyya parameters (Monte Carlo simulations, Gaussian approximation, efficient degrading and upgrading, density evolution, and so on).

Let us briefly discuss the case of density evolution. In order to compute a single Bhattacharyya parameter, $\log N$ intermediate density evolution steps are necessary. However, the task of computing all the $N$ Bhattacharyya parameters can be implemented more efficiently since we can reuse some of these intermediate steps in the computation of multiple synthetic channels. In this way, instead of $N\log N$ density evolution steps, one needs to perform only $2N-1$ such steps. The main result of this work implies that we need roughly $N/\sqrt{\log N}$ density evolution steps, since we need to compute the reliability of $N/\log^{3/2} N$ synthetic channels and each of these computations requires $\log N$ intermediate density evolution steps.
 It remains an open problem to establish how much more we can save by reusing some of the intermediate steps.

The idea of the proof consists in relating the construction problem to the problem of computing the maximum cardinality of an antichain for a suitably defined poset. In particular, we prove that a lower bound to the number of synthetic channels whose reliability has to be computed is equal to the maximum cardinality of an antichain. Furthermore, this bound is tight up to a multiplicative factor scaling as $\log\log N$.  Eventually, we show that the maximum cardinality of an antichain for the poset taken into account is equal to the maximal number of subsets of $\{1, \ldots, n\}$ that share the same sum. Such a sequence is the integer sequence A025591 in \cite{OEISref} and it scales as $N/\log^{3/2} N$.

In order to establish which are the indices of these $N/\log^{3/2} N$ synthetic channels, we need to solve a maximum matching problem on a bipartite graph, which can be done in $O(N^{5/2})$. Note that this operation has to be performed only once, since by computing the reliability of those synthetic channels we can solve the construction problem for any BMS channel.

Let us point out that the main idea of the proof technique is completely general in the sense that it does not depend on the particular order described in \cite{Sc16, BDOT16}. Indeed, suppose that a new partial order on the synthetic channels of polar codes is found. Then, in order to improve the bounds on the the number of synthetic channels to be considered, it suffices to compute the maximum cardinality of an antichain for the poset induced by the new order.   

\section*{Acknowledgement}

The work of M.~Mondelli and R.~Urbanke was supported by grant No. 200021\_166106 of the Swiss National Science Foundation. M.~Mondelli was also supported by the Dan David Foundation.

\appendix

\subsection{Maximum Cardinality of Antichain: Proof of Lemma \ref{lemma:sizeanti}}\label{app:size}

Let $\mathcal{P}([n])$ denote the set of subsets of $\{1, \ldots, n\}$. Consider the following order relation: let $x = \{x_1, x_2, \ldots, x_k\}$ and $y = \{y_1, y_2, \ldots, y_j\}$ be elements of $\mathcal{P}([n])$ with $x_1 < x_2 < \cdots < x_k$ and $y_1 < y_2 < \cdots < y_j$; then we define $x \preceq_{2} y$ if and only if $k \le j$ and $x_i \le y_i$ for all $i\in \{1, \ldots, k\}$. In words, $x \preceq_{2} y$ if and only if $x$ has at most as many elements as $y$ and, by ordering them in an increasing fashion, the $i$-th element of $x$ is not larger than the $i$-th element of $y$. 

The following lemma proves that the set $\mathcal{P}([n])$ with the order $\prec_2$ and the set $\{0, 1\}^n$ with the order $\prec_1$ are essentially the same.

\begin{lemma}[Order-Isomorphism]\label{lemma:iso}
The set $\mathcal{P}([n])$ with the order $\prec_{2}$ is order-isomorphic to the set $\{0, 1\}^n$ with the order $\prec_{1}$.
\end{lemma}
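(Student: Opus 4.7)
I would exhibit an explicit order-isomorphism between the two posets. The natural candidate is the map $\phi : \{0, 1\}^n \to \mathcal{P}([n])$ defined by
\[
\phi(b_1, \ldots, b_n) = \{n+1-\ell : b_\ell = 1\},
\]
which records the positions of the $1$'s of $b$ when indexed from the right. That $\phi$ is a bijection between $\{0,1\}^n$ and $\mathcal{P}([n])$ is immediate. What must be established is the equivalence $b \preceq_1 b'$ if and only if $\phi(b) \preceq_2 \phi(b')$.

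For the forward direction, since $\preceq_1$ is the reflexive transitive closure of the elementary operators $A^{(k)}$ and $L^{(k)}$, it suffices to verify the implication on a single non-trivial move. Under $\phi$, the operator $L^{(k)}$ replaces the element $n+1-k$ of $\phi(b)$ by $n+2-k$; the precondition $b_{k-1} = 0$ guarantees that $n+2-k$ was not already in $\phi(b)$, so the result is still a valid subset and exactly one entry of the sorted list has been shifted up by one. The operator $A^{(k)}$ simply adjoins the single extra element $n+1-k$ to $\phi(b)$. In both cases, a direct sort-and-compare argument shows that $\phi(b')$ dominates $\phi(b)$ in the sense of $\preceq_2$.

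For the reverse direction, suppose $\phi(b) = \{a_1 < \cdots < a_k\}$ and $\phi(b') = \{c_1 < \cdots < c_j\}$ satisfy $\phi(b) \preceq_2 \phi(b')$. I would construct an explicit sequence of elementary moves from $b$ to $b'$ in two phases. First, process the $1$'s of $b$ in order of \emph{decreasing} right-position: for $i = k, k-1, \ldots, 1$ in turn, use a run of $L^{(\cdot)}$ operators to drag the $1$ currently at right-position $a_i$ up to its paired target right-position in $\phi(b')$, which by the $\preceq_2$-hypothesis is at least $a_i$. Second, use $j-k$ applications of $A^{(\cdot)}$ to create the missing $1$'s at the remaining target right-positions.

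The main obstacle is verifying that the schedule in the first phase never causes a collision. The key invariant is that, when it is time to move the $1$ sourced at right-position $a_i$, its corridor to the target is entirely unoccupied: the $1$'s already relocated sit at right-positions strictly above the current target, while the $1$'s still waiting to be moved sit at right-positions $a_1, \ldots, a_{i-1}$ strictly below $a_i$. Maintaining this invariant rigorously, and drawing precisely the slack it requires from the $\preceq_2$-inequalities, is the delicate bookkeeping step; once it is in place the forward and reverse constructions plug together to give the claimed order-isomorphism.
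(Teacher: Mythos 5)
Your overall route is the same as the paper's: the same bijection (your $\phi$ is the paper's $f$) recording the right-indexed positions of the $1$'s, verified in the forward direction on the elementary moves $A^{(k)}$, $L^{(k)}$ and inverted in the reverse direction by an explicit schedule of moves. In fact the paper dismisses the reverse direction with the single word ``analogously,'' so your two-phase construction is more explicit than what is printed. Moreover, the ``delicate bookkeeping'' you defer is actually the easy part: processing the $1$'s in decreasing order of position keeps every corridor clear for exactly the reason you sketch (relocated $1$'s sit strictly above the current target, waiting $1$'s strictly below the current source), so that invariant needs only the two sentences you already wrote.

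The genuine gap is in the step you wave through as ``a direct sort-and-compare argument,'' specifically for $A^{(k)}$. If one sorts both sets increasingly and compares $i$-th smallest against $i$-th smallest, as the definition of $\preceq_2$ literally reads, the addition operator does \emph{not} preserve domination: for $n=3$, $b=(1,0,0)$ and $b'=A^{(3)}(b)=(1,0,1)$ give $\phi(b)=\{3\}$ and $\phi(b')=\{1,3\}$, and $3\not\le 1$. The comparison must instead align the two sorted lists from the \emph{top}: one requires the $i$-th largest element of $\phi(b)$ to be at most the $i$-th largest element of $\phi(b')$ for $i\le k$, equivalently $a_i\le c_{i+(j-k)}$ in your notation (this is the form of the order for which the Sperner-property machinery used later actually applies; the increasing-index comparison should be read as a slip). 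Adopting this alignment forces a matching repair in your reverse direction: the $1$ sourced at $a_i$ must be dragged to $c_{i+(j-k)}$, not to $c_i$, and the $j-k$ applications of $A^{(\cdot)}$ must create the \emph{smallest} targets $c_1,\ldots,c_{j-k}$. With those two adjustments your corridor argument goes through verbatim and the proof is complete.
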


\begin{proof}
By definition of order-isomorphism, in order to prove the claim, we need to find a bijective function $f$ from $\{0, 1\}^n$ to $\mathcal{P}([n])$ such that $x \prec_1 y$ if and only if $f(x)\prec_2 f(y)$ for every $x, y \in \{0, 1\}^n$.

Consider the function $f: \{0, 1\}^n \to \mathcal{P}([n])$ defined as follows. Given $x = (x_n, x_{n-1},\ldots, x_1) \in \{0, 1\}^n$, we have that $i \in f(x)$ if and only if $x_i=1$. In words, we associate to a sequence of $n$ bits the set of indices corresponding to the $1$s. Note that we index the sequence from right to left, i.e., the left-most bit of the sequence has index $n$ and the right-most bit of the sequence has index $1$.

Assume that $x \prec_1 y$. This means that $y$ is obtained from $x$ by applying addition and left-swap operators. Then, $x$ has at most as many $1$s as $y$, which implies that $f(x)$ has at most as many elements as $f(y)$. Furthermore, the $1$s of $y$ are placed more to the left than the $1$s of $x$, which implies that $i$-th element of $x$ is not larger than the $i$-th element of $y$ for all $i$. Hence, $f(x) \prec_2 f(y)$. Analogously, we can prove that $f(x) \prec_2 f(y)$ implies that $x \prec_1 y$, which yields the desired claim.
\end{proof}

Since $(\mathcal{P}([n]),\prec_2)$ is order-isomorphic to $(\{0, 1\}^n, \prec_1)$, it suffices to compute the maximum cardinality of an antichain of the former poset. To do so, let us define the concept of rank function.

\begin{definition}[Rank Function]\label{def:rank}
Given a poset $P$ with the order $\prec$, a \emph{rank function} is a function $\rho : P \to \mathbb N$ that fulfills the following properties:
\begin{enumerate}
\item if $x$ is a minimal\footnote{We say that $x$ is a \emph{minimal} element of $P$ if there is no $y\in P$ such that $y \prec x$.} element of $P$, then $\rho(x) =0$; 
\item if $y$ covers\footnote{We say that $y$ \emph{covers} $x$ if $x \prec y$ and there is no other element $z$ such that $x \prec z \prec y$.} $x$ , then $\rho(y) = \rho(x) + 1$. 
\end{enumerate}
\end{definition}

If a poset is equipped with a rank function $\rho$, we say that the element $x$ has rank $\rho(x)$. The set $\mathcal{P}([n])$ with the order $\prec_2$ is equipped with a rank function, as proved in the following lemma.

\begin{lemma}[Rank Function for ${\mathcal P}(${$[n]$}$), \prec_2)$] \label{lemma:rank}
Given $x = \{x_1, x_2, \ldots, x_k\} \in \mathcal{P}([n])$, define
\begin{equation}\label{eq:defrank}
\rho(x) = \sum_{i=1}^{k} x_i.
\end{equation}
Then, $\rho$ is the rank function for the set $\mathcal{P}([n])$ with the order $\prec_2$.
\end{lemma}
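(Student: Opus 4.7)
The plan is to verify the two defining conditions of a rank function (Definition \ref{def:rank}) for $\rho(x) = \sum_{i=1}^{k} x_i$ on the poset $(\mathcal{P}([n]), \prec_2)$.

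For the first condition, I would argue that $\emptyset$ is the unique minimum of the poset: for every $x \in \mathcal{P}([n])$, the relation $\emptyset \preceq_2 x$ holds because $|\emptyset| = 0 \le |x|$ and the element-wise inequality in the definition of $\prec_2$ is vacuous. Hence $\emptyset$ is the unique minimal element, and $\rho(\emptyset) = 0$ by the empty-sum convention.

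For the second condition, rather than manipulating the definition of $\prec_2$ directly, I would work through the order isomorphism of Lemma \ref{lemma:iso} and characterize the covering relations in $(\{0,1\}^n, \prec_1)$. Since $\prec_1$ is the transitive closure of the relations generated by the addition operator $A^{(k)}$ and the left-swap operator $L^{(k)}$, every covering must arise from a single application of one of these operators (a longer sequence would admit an intermediate). I would then establish three facts. First, every applicable $L^{(k)}$ is a covering: in any would-be intermediate $z$, the Hamming weight is either $|x|$ or $|x|+1$; the first case forces $z = x$ because one cannot move $1$s to the right from $z$ to recover $L^{(k)}(x)$, and the second case is impossible since $|L^{(k)}(x)| = |x|$. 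Under the bijection $f$ of Lemma \ref{lemma:iso}, $L^{(k)}$ corresponds to replacing an element $j$ of the subset by $j+1$, so $\rho$ increases by exactly $1$. Second, the rightmost addition $A^{(n)}$ is a covering by an analogous argument, and under $f$ it corresponds to adding the element $1$ to the subset, so $\rho$ again increases by exactly $1$.

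Third, and this is the key point, additions $A^{(k)}$ with $k < n$ are \emph{not} coverings: when $x_k = 0$ and $k < n$, I would explicitly produce a strict intermediate depending on the value of $x_{k+1}$. If $x_{k+1} = 0$, set $z = A^{(k+1)}(x)$; one then checks $A^{(k)}(x) = L^{(k+1)}(z)$, so $z$ lies strictly between $x$ and $A^{(k)}(x)$. If $x_{k+1} = 1$, set $z = L^{(k+1)}(x)$; one then checks $A^{(k)}(x) = A^{(k+1)}(z)$, and again $z$ lies strictly in between. Combining these three facts, every covering relation in $(\mathcal{P}([n]), \prec_2)$ increases $\rho$ by exactly $1$, which is the second condition of Definition \ref{def:rank}.

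The main obstacle is the case analysis for $A^{(k)}$ with $k < n$: identifying the correct two-step decomposition $x \prec_1 z \prec_1 A^{(k)}(x)$ in each of the two sub-cases and verifying that the constructed $z$ differs from both endpoints. Once the bits at positions $k$ and $k+1$ are tracked carefully, the verification reduces to the definitions of $A^{(k)}$ and $L^{(k)}$ and is routine.
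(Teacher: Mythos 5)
Your proof is correct and follows the same outline as the paper's: check that $\emptyset$ is the unique minimal element with $\rho(\emptyset)=0$, and that every covering relation increases the sum by exactly one. The paper simply \emph{asserts} that the covers of $(\mathcal{P}([n]),\prec_2)$ are ``adjoin the element $1$'' and ``increment one element by $1$'', whereas you prove the direction that is actually needed by passing to $(\{0,1\}^n,\prec_1)$, observing that every cover is a single application of $A^{(k)}$ or $L^{(k)}$, and showing that $A^{(k)}$ with $k<n$ always factors through an intermediate --- your two-case decomposition via $A^{(k+1)}$ and $L^{(k+1)}$ checks out, and it supplies exactly the justification the paper leaves implicit (your claims that $L^{(k)}$ and $A^{(n)}$ really are covers are not even required for the rank-function property).
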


\begin{proof}
Clearly, $x=\emptyset$ is the unique minimal element of the poset $\mathcal{P}([n])$ with the order $\prec_2$. Furthermore, $\rho(x)=0$, which proves the first property of Definition \ref{def:rank}. 

Assume now that $y$ covers $x$. Then, either $y = x \cup \{1\}$ or $x$ and $y$ differ only in 1 element, say the $i$-th element, and $y_i = x_i +1$. In both these cases, $\rho(y) = \rho(x) + 1$, which proves the second property of Definition \ref{def:rank}.
\end{proof}

Let $P$ be a poset equipped with a rank function. If every maximal\footnote{We say that $x$ is a \emph{maximal} element of $P$ if there is no $y\in P$ such that $x \prec y$.} element has the same rank, call it $r_{\rm max}$, then we say that $P$ is a \emph{graded} poset and we can decompose it as
\begin{equation}
P = P_0 \cup P_1 \cup \cdots \cup P_{r_{\rm max}},
\end{equation}
where $P_i$ contains all the elements of $P$ with rank $i$. Every chain with the maximum cardinality passes through exactly one element of each of the subsets $P_i$, starting from $P_0$, then $P_1$, and so on.



Note that if two elements of a poset have the same rank, then they are not comparable. Therefore, for all $i\in \{1, \ldots, r_{\rm max}\}$, the subset $P_i$ is an antichain. The following definition relates these antichains to the antichain with the maximum cardinality.

\begin{definition}[Sperner Property]
Let $P$ be a graded poset and let $P_i$ be the antichain that contains all the elements of $P$ with rank $i$. We say that $P$ has the \emph{Sperner property} if the maximum cardinality of an antichain is equal to $\max_{i} |P_i|$.
\end{definition}

\begin{lemma}[Sperner Property for $({\mathcal P}(${$[n]$}$), \prec_2)$]\label{lemma:sperner}
The poset $\mathcal{P}([n])$ with the order $\prec_2$ has the Sperner property.
\end{lemma}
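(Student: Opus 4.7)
The plan is to apply Dilworth's theorem. Because distinct comparable elements of a graded poset have distinct ranks, every rank level $P_r=\{x\in\mathcal{P}([n]):\rho(x)=r\}$ is an antichain, so $\max_{\mathcal{A}}|\mathcal{A}|\ge \max_r |P_r|=M(n)$. It therefore suffices to exhibit a partition of $(\mathcal{P}([n]),\prec_2)$ into at most $M(n)$ chains. My strategy is to obtain such a partition as a \emph{symmetric chain decomposition} (SCD): a partition into saturated chains whose rank spans $[\rho_{\min},\rho_{\max}]$ all satisfy $\rho_{\min}+\rho_{\max}=n(n+1)/2$ (symmetric about the middle rank). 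Each chain of an SCD contains exactly one element at every rank in its span, and in particular one at the central rank $\lfloor n(n+1)/4\rfloor$; hence the number of chains in any SCD is $|P_{\lfloor n(n+1)/4\rfloor}|$. Combining this with Dilworth's theorem and the antichain lower bound gives $M(n)\le\max_{\mathcal{A}}|\mathcal{A}|\le |P_{\lfloor n(n+1)/4\rfloor}|\le M(n)$, so all three quantities are equal and the Sperner property follows.

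As a preparatory structural observation, the poset is rank-symmetric: the involution $x\mapsto[n]\setminus x$ is an order-reversing bijection that sends rank $r$ to rank $n(n+1)/2-r$, so $|P_r|=|P_{n(n+1)/2-r}|$. This is a direct check from the definition of $\prec_2$ and is reflected in the symmetry of the rank generating function $F_n(q)=\prod_{j=1}^n(1+q^j)$; although one does not formally need it, it guides the construction and suggests that the SCD should respect complementation.

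I would construct the SCD by induction on $n$, using the decomposition $\mathcal{P}([n])=\mathcal{A}\sqcup\mathcal{B}$ with $\mathcal{A}=\{x:n\notin x\}$ and $\mathcal{B}=\{x:n\in x\}$. Both $\mathcal{A}$ (as-is) and $\mathcal{B}$ (via $x\mapsto x\setminus\{n\}$) are order-isomorphic to $(\mathcal{P}([n-1]),\prec_2)$, the latter with every rank shifted by $+n$. By the inductive hypothesis each has an SCD, and the induction step splices each chain $C\subseteq\mathcal{A}$ with the image in $\mathcal{B}$ of a matched inductive chain to form a single chain symmetric about the new middle rank $n(n+1)/4$. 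The main technical obstacle is that adjoining $n$ shifts rank by $n$ rather than by $1$, so there is a rank gap of length $n-1-(b-a)$ between the top of an $\mathcal{A}$-chain (with span $[a,b]$) and the bottom of its $\mathcal{B}$-counterpart (spanning $[a+n,b+n]$). This gap must be filled by intermediate elements of $\mathcal{P}([n])$ obtained via the ``increment'' cover of Lemma~\ref{lemma:rank}, namely by successively raising the largest element of the top $\mathcal{A}$-element toward $n$ until $n$ itself is reached. Showing that this splicing can be carried out globally so that the resulting chains are saturated, pairwise disjoint, and exhaust $\mathcal{P}([n])$ is the heart of the argument, and the step where one must be most careful about matching the $\mathcal{A}$-chains with their $\mathcal{B}$-counterparts so that each bridging sequence genuinely exists.
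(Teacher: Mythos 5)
There is a genuine gap, and it sits exactly where you place ``the heart of the argument.'' Your reduction of the Sperner property to the existence of a symmetric chain decomposition (SCD) is sound: rank levels are antichains, every chain of an SCD meets the middle rank exactly once, and Dilworth's theorem closes the sandwich. But the SCD itself is never constructed, and the inductive splicing you sketch cannot work as described. First, $\mathcal{A}$ and $\mathcal{B}$ exhaust $\mathcal{P}([n])$, so every candidate ``bridging'' element at the intermediate ranks $b+1,\ldots,a+n-1$ already belongs to some chain of one of the two inductive SCDs; using it to fill a gap destroys the partition, and you give no rule for how to break open and reassemble the affected chains consistently. Second, the pairing of each $\mathcal{A}$-chain with ``its'' $\mathcal{B}$-counterpart is not even always well posed: a symmetric chain of $\mathcal{P}([n-1])$ with rank span $[a,b]$ can have $b-a$ as large as $n(n-1)/2$, while its $\mathcal{B}$-copy spans $[a+n,b+n]$, so for $n\ge 4$ the two rank intervals overlap and the chains cannot be concatenated into a single saturated chain at all; any repair requires a nontrivial global rematching, which is precisely the unsolved difficulty. (This is in sharp contrast to the Boolean lattice, where adjoining a new element shifts rank by exactly $1$ and the classical bracketing induction goes through.)

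More importantly, the poset $(\mathcal{P}([n]),\prec_2)$ is Stanley's poset $M(n)$, and whether it admits a symmetric chain decomposition is a well-known conjecture of Stanley that is considerably harder than the Sperner property itself; the known proofs of the Sperner property for this poset are algebraic, via order-raising operators and hard-Lefschetz-type arguments. That is exactly the route the paper takes: it does not construct chains at all but invokes Theorem 4.1 of \cite{St91}. So your strategy aims at a statement strictly stronger than Lemma~\ref{lemma:sperner} and leaves its crucial step unproved; to complete the argument you should instead follow (or cite) the algebraic proof.
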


The proof of the result above is algebraic and it follows from Theorem 4.1 of \cite{St91} (see also \cite[Section 4.1.2]{St91} for a more detailed discussion). Eventually, we are ready to prove Lemma \ref{lemma:sizeanti}.

\begin{proof}[Proof of Lemma \ref{lemma:sizeanti}]
Consider the poset $\mathcal{P}([n])$ with the order $\prec_2$. By Lemma \ref{lemma:rank}, its rank function is given by \eqref{eq:defrank}. Furthermore, the maximal element is unique, hence the poset is graded. This maximal element is $\{1, 2, \ldots, n\}$ and it has rank
\begin{equation}\label{eq:rmax}
r_{\rm max} = \sum_{i=1}^{n} i = \frac{n(n+1)}{2}.
\end{equation}
Furthermore, by Lemma \ref{lemma:sperner}, $({\mathcal P}(${$[n]$}$), \prec_2)$ has the Sperner property. Hence, the cardinality of the largest antichain is given by
\begin{equation*}
\max_{i\in \{0, \ldots, r_{\rm max}\}} |P_i| = M(n),
\end{equation*} 
where $M(n)$ is defined as the maximal number of subsets $\{1, \ldots, n\}$ that share the same sum. Since $(\mathcal{P}([n]),\prec_2)$ is order-isomorphic to $(\{0, 1\}^n, \prec_1)$ by Lemma \ref{lemma:iso}, the thesis immediately follows.
\end{proof}

As a final remark, let us point out other two interesting properties of $(\mathcal{P}([n]), \prec_2)$ that follow from the discussion in \cite[Section 4.1.2]{St91}:
\begin{itemize}
\item The poset $\mathcal{P}([n])$ with the order $\prec_2$ is \emph{rank symmetric}, i.e., $|P_i| = |P_{r_{\rm max}-i}|$ for all $i\in \{0, \ldots, r_{\rm max}\}$, where $r_{\rm max}$ is given by \eqref{eq:rmax}.
\item  The poset $\mathcal{P}([n])$ with the order $\prec_2$ is \emph{rank unimodal}, i.e., there is a $j$ such that $|P_0| \le |P_1| \le \cdots \le |P_j| \ge |P_{j+1}| \ge \cdots \ge |P_{r_{\rm max}}|$.
\end{itemize}

As a result, we have that 
\begin{equation*}
\max_{i\in \{0, \ldots, r_{\rm max}\}} |P_i| = |P_{\lfloor r_{\rm max}/2 \rfloor}| = |P_{\lceil r_{\rm max}/2 \rceil}|.
\end{equation*}
In words, the subset(s) $P_i$ with the maximum cardinality correspond to the middle rank(s). This means that $M(n)$ is equal to the number of subsets of $\{1, \ldots, n\}$ that have sum equal to $\lfloor n(n+1)/4\rfloor$ or to $\lceil n(n+1)/4\rceil$.

\subsection{Asymptotic Formula for $M(n)$: Proof of Theorem \ref{th:asym}}\label{app:asym}

\begin{proof}[Proof of Theorem \ref{th:asym}]
Recall that $M(n)$ is defined as the maximal number of subsets of $\{1,\ldots,n\}$ that share the same sum. Clearly, for any $K\in \mathbb N$, the number of subsets of $\{1,\ldots,n+1\}$ with sum $K$ is no smaller than the number of subsets of $\{1,\ldots,n\}$ with sum $K$. Hence, $M(n)$ is an increasing sequence and its limit is equal to the limit of any of its subsequences. The rest of the proof consists in showing that a suitably defined subsequence of $M(n)$ has the asymptotic behavior given by \eqref{eq:asympt}.

From the discussion at the end of Appendix \ref{app:size}, we have that the integer $K$ that maximizes the number of subsets of $\{1,\ldots,n\}$ with sum $K$ is $\lfloor n(n+1)/4\rfloor$ or $\lceil n(n+1)/4\rceil$. Assume now that $n\equiv 0$ or $n\equiv 3$ modulo $4$. Then, we have that
\begin{equation*}
\frac{n(n+1)}{4} \in \mathbb N.
\end{equation*}
Furthermore, we claim that $M(n)$ is equal to the number of choices of $+$ and $-$ signs such that
\begin{equation}\label{eq:pm}
\pm 1 \pm 2 \ldots \pm n=0.
\end{equation}
To see this, let $A$ be a subset of $\{1,\ldots,n\}$ with sum $n(n+1)/4$. Then, the set $\{1,\ldots,n\} \setminus A$ has also sum $n(n+1)/4$. By associating the positive sign to the elements of $A$ and the negative sign to the elements of $\{1,\ldots,n\} \setminus A$, we have that the overall sum is $0$. As a result, we have found a bijection between the set of subsets of $\{1,\ldots,n\}$ with sum $n(n+1)/4$ and the set of choices of $+$ and $-$ signs such that \eqref{eq:pm} holds. 

Define $S(n)$ as the number of choices of $+$ and $-$ signs such that \eqref{eq:pm} holds. From the discussion in the previous paragraph, we conclude that $S(n)=M(n)$ for $n\equiv 0$ or $n\equiv 3$ modulo $4$. Thus, the asymptotic behavior of $S(n)$ for $n\equiv 0$ or $n\equiv 3$ modulo $4$ is the same as the asymptotic behavior of $M(n)$.

In \cite[Theorem 2.1]{AnTo02}, it is proved that $S(n)$ is equal to the coefficient of $x^{n(n+1)/4}$ in the expansion of $\prod_{i=1}^n (1+x^i)$ and it is it conjectured that 
\begin{equation*}
S(n) = \sqrt{\frac{6}{\pi}}\frac{2^n}{n^{3/2}}(1 +o(1)).
\end{equation*}
This conjecture is proved in \cite{Su13}, which implies our desired result. 
\end{proof}

\bibliographystyle{IEEEtran}
\bibliography{lth,lthpub}

\end{document}